\documentclass[%
 reprint,
superscriptaddress,
 amsmath,amssymb,
 aps,
]{revtex4-2}

\usepackage{amsmath}
\usepackage {xcolor}
\usepackage{comment}
\usepackage{bbm}
\usepackage{mathrsfs}
\usepackage{comment}
\usepackage{amssymb}
\usepackage{graphicx}
\usepackage{subfigure}
\usepackage[colorlinks,
            linkcolor=blue,
            anchorcolor=blue,
            citecolor=blue]{hyperref}
\newtheorem{theorem}{Theorem}

\allowdisplaybreaks[4]

\def\ra{\rangle}
\def\la{\langle}

\usepackage{graphicx}
\usepackage{dcolumn}
\usepackage{bm}

\begin{document}

\preprint{APS/123-QED}

\title{Tighter entropic uncertainty relations in the presence of quantum memories for complete sets of mutually unbiased bases}

\author{Qing-Hua Zhang}
\email[]{qhzhang@csust.edu.cn}
\affiliation{School of Mathematics and Statistics, Changsha University of Science and Technology, Changsha 410114, China}
\affiliation{Hunan Provincial Key Laboratory of Mathematical Modeling and Analysis in Engineering, Changsha University of Sicence and Technology, Changsha 410114, China}

\author{Cong Xu}
\affiliation{School of Mathematical Sciences, Capital Normal University,
Beijing 100048, China}

\author{Jing-Feng Wu}
\affiliation{School of Mathematical and Big Data, Jining  University, Qufu 273155, China}

\author{Shao-Ming Fei}
\affiliation{School of Mathematical Sciences, Capital Normal University,
Beijing 100048, China}

\thanks{}

\begin{abstract}
Entropic uncertainty relations provide an information-theoretic framework for quantifying the fundamental indeterminacy inherent in quantum mechanics. We propose more stringent quantum-memory-assisted entropic uncertainty relations for complete sets of mutually unbiased bases in multipartite scenarios. We present lower and upper bounds of the quantum uncertainties based on the complementarity of the observables, the purity of the measured state, the (conditional) von-Neumann entropies, the Holevo quantities and mutual information. The results are illustrated by several representative cases, showing that our bounds are tighter than and outperform previously existing bounds.
\end{abstract}

\maketitle

\section{Introduction}

An essential aspect of quantum mechanics lies in the unavoidable indeterminacy that arises when dealing with complementary physical quantities. The original variance-based uncertainty relation related to the position and the momentum of a particle was proposed by Heisenberg~\cite{heisenberg1927uber}. Deutsch~\cite{deutsch1983uncertainty} established an entropic uncertainty relation for finite-spectrum observables within the Shannon entropy framework, which was subsequently tightened in later works by Kraus~\cite{kraus1987complementary} and by Maassen and Uffink~\cite{maassen1988generalized}:
\begin{equation}\label{mu}
H(M_1)+H(M_2)\geqslant -\log_2 c=: q_{MU},
\end{equation}
where $c=\max_{jk}|\langle \psi_j|\phi_k\rangle|^2$, with $|\psi_j\ra$ and $|\phi_k\ra$ being the eigenvectors of observables $M_1$ and $M_2$,  respectively. $H(M_1)=-\sum_i p_i\log_2 p_i$ ($H(M_2)=-\sum_i q_i\log_2 q_i$) is the Shannon entropy with $p_i=\langle \psi_i|\rho|\psi_i\rangle$ ($q_i=\langle \phi_i|\rho|\phi_i\rangle$). EURs for multiple measurements 
have also been extensively investigated ~\cite{sanchezruiz1995improved,sanchez1993entropic,coles2017entropic,PhysRevA.83.062338,chen2018improved,liu2015entropic,xiao2016strong,WANG2024138876,WANG2024129364}.

Recent advancement in the theory of entropic uncertainty relations is the extension to scenarios where the measured system is entangled with a quantum memory. In such settings, quantum correlations between the system and an environmental memory can lower the conditional entropy of measurement outcomes, which in turn relaxes the usual uncertainty bounds for an observer who has access to the memory. An entropic uncertainty relation in the presence of quantum memory was formulated by Renes $et\ al.$~\cite{renes2009conjectured} and by Berta $et\ al.$~\cite{berta2010uncertainty}, and later tested experimentally in Refs.~\cite{li2011experimental,prevedel2011experimental}. In the presence of quantum memory, the entropic uncertainty relation for two incompatible observables is given by
\begin{equation}\label{renesberta}
S(M_1|B)+S(M_2|B)\geqslant -\log_2 c+S(A|B),
\end{equation}
where $ S(M|B)=S(\rho^{MB})-S(\rho^B)$ denotes the conditional von Neumann entropy of the postmeasurement state $\rho^{MB}=\sum_i(|\psi_i\ra\la\psi_i|\otimes \mathbb{I}) \rho^{AB} (|\psi_i\ra\la\psi_i|\otimes \mathbb{I})$, $S(A|B)=S(\rho^{AB})-S(\rho^B)$, $\rho^B$ is the reduced state of particle $B$, and $S(\rho)=-\mathrm{tr}\rho\log\rho$ is von Neumann entropy. The lower bound on measurement uncertainty is influenced by the amount of entanglement shared between the measured system \(A\) and the quantum memory \(B\). When the memory \(B\) is absent, inequality~(\ref{renesberta}) reduces to
$H(M_1)+H(M_2)\geqslant -\log_2 c + S(\rho^A)$,
which yields a tighter lower bound than inequality~(\ref{mu}) for states satisfying \(S(\rho^A)>0\). Consequently, QMA-EURs play an important role in a variety of quantum information processing tasks, such as quantum key distribution~\cite{koashi2009simple},
quantum cryptography~\cite{dupuis2014entanglement,konig2012unconditional},
quantum randomness~\cite{vallone2014quantum}, entanglement witness~\cite{berta2014entanglement,huang2010entanglement,hu2012quantum}, EPR steering~\cite{sun2018demonstration,schneeloch2013einstein} and quantum metrology~\cite{giovannetti2011advances}, tighter lower bounds of QMA-EURs have attracted much attention ~\cite{pati2012quantum,coles2014improved,adabi2016tightening,zhang2015entropic,hu2013competition,ming2020improved,xie2021optimized,dolatkhah2020tightening,dolatkhah2022tripartite}. A tripartite entropic uncertainty relation involving two quantum memories $B$ and $C$ was developed by Renes \textit{et al.}~\cite{renes2009conjectured} and by Berta \textit{et al.}~\cite{berta2010uncertainty}.

\begin{equation}\label{reneslb}
S(M_1 | B)+S(M_2 | C) \geqslant q_{MU}.
\end{equation}

Later, Ming $et\ al.$~\cite{ming2020improved} improved the tripartite QMA-EUR in terms of the Holevo quantity and mutual information,
\begin{equation}
S(M_1 | B)+S(M_2 | C) \geqslant q_{M U}+\max \left\{0, \delta_1\right\},
\end{equation}
where $\delta_1=2 S(A)+q_{M U}-\mathcal{I}(A:B)-\mathcal{I}(A:C)+\mathcal{I}(M_2:B)+\mathcal{I}(M_1: C)-H(M_1)-H(M_2)$, $\mathcal{I}(A:B)=S\left(\rho^A\right)+S\left(\rho^B\right)-S\left(\rho^{A B}\right)$ and $\mathcal{I}(A:C)=S\left(\rho^A\right)+S\left(\rho^C\right)-S\left(\rho^{A C}\right)$ stand for the mutual information, $\mathcal{I}(M_2: B)=S\left(\rho^{M_2}\right)+S\left(\rho^B\right)-S\left(\rho^{M_2 B}\right)$ and $\mathcal{I}(M_1: C)=S\left(\rho^{M_1}\right)+S\left(\rho^C\right)-S\left(\rho^{M_1 C}\right)$ are the Holevo quantities. Wu  $et\ al.$~\cite{wu2022tighter} improved the above bound further,
\begin{equation}
S(M_1 | B)+S(M_2 | C) \geqslant q_{M U}+\max \left\{0, \delta_2\right\},
\end{equation}
where $\delta_2=2 S(A)+q_{M U}-\mathcal{I}(M_1: B)- \mathcal{I}(M_2: C)-H(M_1)-H(M_2)$.

In practical quantum information processing, it is often necessary to characterize measurement uncertainty beyond pairs of observables in bipartite or tripartite scenarios, extending instead to multiple measurements performed on correlated multipartite systems~\cite{wu2022tighter,zhang2023entropic,XU2025130570,PhysRevA.98.032329}. By incorporating conditional von Neumann entropies together with mutual information and Holevo quantities, Zhang and Fei derived an entropic uncertainty relation for $m$-tuple of measurements $\mathbf{M}={\{M_i\},\ i=1,\dots,m}$, in the context of $n$ memories $(n\leqslant m)$ ~\cite{zhang2023entropic}:
\begin{equation}\label{zhang}
\begin{aligned}
\sum_{t=1}^n\sum_{M_i\in \mathbf{S}_t} S(M_i|B_t)\geqslant &-\frac{1}{m-1} \log _2\left(\prod_{i<j}^{m} c_{ij}\right)\\
&+\frac{1}{m-1}\sum_t \frac{m_t(m_t-1)}{2}S(A|B_t)\\
&+\max\left\{0,\delta_{mn}\right\},
\end{aligned}
\end{equation}
where
\begin{align*}
\delta_{mn}=&\frac{m(m-1)-\sum_{t=1}^n m_t(m_t-1)}{2(m-1)}S(A)\\
&+\sum_{t=1}^n\frac{m_t(m_t-1)}{2(m-1)}\mathcal{I}(A:B_t)\\
&-\sum_{t=1}^n\sum_{M_i\in \mathbf{S}_t} \mathcal{I}(M_i:B_t),
\end{align*}
 $c_{ij}=\max _{k,l}|\langle \psi_k^i |\psi_l^j\rangle|^2$ with $|\psi_k^i\rangle$ and $|\psi_l^j\rangle$ the eigenvectors of ${M}_i$ and $M_j$, respectively, the $n$ non-empty subsets $\mathbf{S}_t$ of $\mathbf{M}$ satisfy $\bigcup_{t=1}^n \mathbf{S}_t=\mathbf{M}$ and $\mathbf{S}_s \bigcap \mathbf{S}_t=\emptyset$ for $s\neq t$, and $m_t$ is the cardinality of $\mathbf{S}_t$.

Mutually unbiased bases (MUBs) constitute an essential issue in quantum information theory. As shown in Refs.~\cite{xie2021optimized,zhang2023entropic}, the corresponding uncertainty bounds are governed by the degree of complementarity between the measured observables, as well as by information-theoretic quantities such as the (conditional) von Neumann entropy, the Holevo quantity, and the mutual information. In this work, we focus on proposing more stringent QMA-EURs for mutually unbiased bases (MUBs).

\section{QMA-EURs for complete sets of MUBs}
Let $|\psi_j\ra$ and $|\phi_k\ra$ be the eigenvectors of observables $M_1$ and $M_2$,  respectively. 
The $d$-dimensional  orthonormal bases $\{|\phi_k\rangle\}$ and $\{|\psi_j\rangle\}$ are mutually unbiased bases if
$|\langle \psi_j|\phi_k\rangle|^2=\frac{1}{d}$ for all $j$ and $k$. For simplicity, we denote $M$ also the corresponding
basis given by the  eigenvectors of $M$ without confusion. A collection of \(n\) orthonormal bases \(\{M_i\}\) is referred to as \(n\) MUBs if any two distinct bases \(M_j\) and \(M_k\) satisfy the mutual unbiasedness condition for all \(j \neq k\). 
For instance, the eigenvectors of the three standard Pauli operators,
$\sigma_{{x}}  =|0\rangle\langle 1|+|1\rangle\langle 0|$, $\sigma_{{y}}  =-\mathbf{i}|0\rangle\langle 1|+\mathbf{i}|1\rangle\langle 0|$ and $\sigma_{{z}}  =|0\rangle\langle 0|-|1\rangle\langle 1|$, form a set of three MUBs, where $\mathbf{i}=\sqrt{-1}$ is the imaginary unit. It has been established that when the Hilbert space dimension \(d\) is a power of a prime, the largest possible number of MUBs equals \(d+1\) \cite{10.1063/1.2713445}. Such a maximal collection is commonly referred to as a complete set of mutually unbiased bases (CMUBs).

Consider an uncertainty game involving $n+1$ players: Alice, Bob$_1$, Bob$_2$, $\dots$, Bob$_n$. The players share an $(n+1)$-partite quantum state. Alice randomly selects one measurement from CMUBs $\mathbf{M} = \{M_i \mid i = 1, \dots, d+1\}$. Partition $\mathbf{M}$ into $n$ non-empty, mutually disjoint subsets $\mathbf{S}_t \subset \mathbf{M}$ ($t = 1, \dots, n$) such that $\bigcup_{t=1}^n \mathbf{S}_t = \mathbf{M}$. If Alice chooses a measurement from the subset $\mathbf{S}_t$, she announces her choice to Bob$_t$. Each Bob$_t$ attempts to guess the outcome of Alice’s measurement, see Fig. \ref{fig_game}. Based on the conditional von-Neumann entropies, purity and Holevo quantities, we have the following theorem.
 \begin{figure}[t]
		\includegraphics[width=8cm]{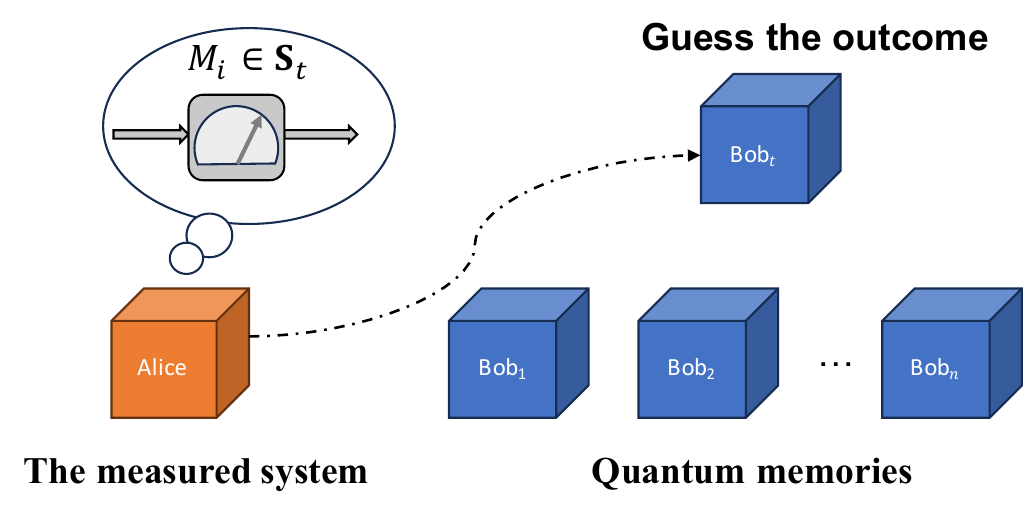}
 \caption{\label{fig_game}Illustration of the uncertainty game. Alice and all Bobs share a common state and agree with CMUBs $\mathbf{M} = \{M_i \mid i = 1, \dots, d+1\}$. Alice carries out one measurement $M_i$ of $\mathbf{M}$ and announces her choice to Bob$_t$ if  $M_i\in \mathbf{S}_t$. The task of each Bob is to guess the Alice's measurement outcome.}
 \end{figure}

\begin{theorem}\label{qmaeurthm1}
The following entropic uncertainty relation for CMUBs in the context of $n$ memories holds,
\begin{equation}
\begin{aligned}\label{Thm1}
&\sum_{t=1}^n\sum_{M_i\in \mathbf{S}_t} S(M_i|B_t) \\ 
&\geqslant \frac{d+1}{2} \log _2 d+\sum_t \frac{m_t(m_t-1)}{2d}S(A|B_t)\\
&+\max\left\{0,\delta_{n}^{CMUBs}\right\},
\end{aligned}
\end{equation}
where
\begin{align*}
\delta_{n}^{CMUBs}=&L_{CMUBs}-\frac{d+1}{2} \log _2 d-\sum_{t=1}^n\frac{m_t(m_t-1)}{2d}S(A)\\
&+\sum_{t=1}^n\frac{m_t(m_t-1)}{2d}\mathcal{I}(A:B_t)\\
&-\sum_{t=1}^n\sum_{M_i\in \mathbf{S}_t} \mathcal{I}(M_i:B_t),
\end{align*}
$m_t$ is the cardinality of $\mathbf{S}_t$, $L_{CMUBs}=(d+1)[\log_2(1+\lfloor{v}\rfloor)-\frac{\lfloor{v}\rfloor}{v}(1+\lfloor{v}\rfloor-v)\log_2(1+\frac{1}{\lfloor{v}\rfloor})]$, $\lfloor{\cdot}\rfloor$ denotes the integer part of a real number, $v=\frac{d+1}{\Pi(\rho^A)+1}$ with the purity $\Pi(\rho^A)={\mathrm{tr}}((\rho^A)^2)$, $\rho^A$ is the reduced state obtained by tracing over the Bobs' systems.
\end{theorem}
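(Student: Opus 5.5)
The plan is to combine two ingredients: the Zhang--Fei multi-memory relation (\ref{zhang}) specialised to CMUBs, and a memoryless lower bound for the total Shannon entropy of a complete set of MUBs in terms of the purity of $\rho^A$. The theorem then follows by taking the larger of two lower bounds for the same left-hand side.

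First, specialise (\ref{zhang}). For CMUBs we have $m=d+1$ and $c_{ij}=1/d$ for all $i<j$. Hence
\begin{equation*}
-\frac{1}{m-1}\log_2\Bigl(\prod_{i<j}c_{ij}\Bigr)=\frac{1}{d}\cdot\frac{(d+1)d}{2}\log_2 d=\frac{d+1}{2}\log_2 d .
\end{equation*}
The memory term becomes $\sum_t \frac{m_t(m_t-1)}{2d}S(A|B_t)$. This gives the first two summands of (\ref{Thm1}), i.e.\ the part of the bound that holds with the $\max\{0,\cdot\}$ term replaced by $0$.

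Second, obtain the $\delta_n^{CMUBs}$ branch from a direct decomposition. For each $i$ and $t$,
\begin{equation*}
S(M_i|B_t)=S(\rho^{M_iB_t})-S(\rho^{B_t})=H(M_i)-\mathcal{I}(M_i:B_t).
\end{equation*}
Summing over all $i$, and noting that the partition covers every $M_i$ exactly once,
\begin{equation*}
\sum_t\sum_{M_i\in\mathbf{S}_t}S(M_i|B_t)=\sum_{i=1}^{d+1}H(M_i)-\sum_t\sum_{M_i\in\mathbf{S}_t}\mathcal{I}(M_i:B_t).
\end{equation*}
We then need $\sum_{i=1}^{d+1}H(M_i)\geqslant L_{CMUBs}$. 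Granting this, we rewrite the right-hand side as the first-part bound plus a remainder. Using $S(A|B_t)=S(A)-\mathcal{I}(A:B_t)$, we have
\begin{equation*}
\sum_t\frac{m_t(m_t-1)}{2d}S(A|B_t)=\sum_t\frac{m_t(m_t-1)}{2d}\bigl[S(A)-\mathcal{I}(A:B_t)\bigr].
\end{equation*}
So the remainder is exactly $\delta_n^{CMUBs}$ as defined in the statement. Hence the left-hand side is at least $\frac{d+1}{2}\log_2 d+\sum_t\frac{m_t(m_t-1)}{2d}S(A|B_t)+\delta_n^{CMUBs}$. Combining this with the first part yields the $\max\{0,\delta_n^{CMUBs}\}$ form.

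The main obstacle is the purity bound $\sum_{i=1}^{d+1}H(M_i)\geqslant L_{CMUBs}$. I would use the known index-of-coincidence identity for a complete set of MUBs (Ivanovic / Wu--Yu--M{\o}lmer):
\begin{equation*}
\sum_{i=1}^{d+1}\sum_k p^{(i)}_k{}^2=\Pi(\rho^A)+1,
\end{equation*}
where $p^{(i)}_k=\langle\psi^i_k|\rho^A|\psi^i_k\rangle$. This follows from expanding $\rho^A$ in the operator basis generated by the CMUB projectors.

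Next, use the fact that Shannon entropy dominates the collision-type bound. Among distributions with fixed $\sum_k p_k^2$, the minimum Shannon entropy is attained by distributions with several equal large entries and one smaller entry (Harremo\"es--Tops{\o}e). This yields a lower envelope $h(\cdot)$ of $H$ as a function of the average collision probability. Jensen/convexity of that envelope then gives
\begin{equation*}
\sum_i H(M_i)\geqslant (d+1)\,h\Bigl(\frac{\Pi(\rho^A)+1}{d+1}\Bigr).
\end{equation*}
With $v=\frac{d+1}{\Pi(\rho^A)+1}$, the piecewise formula for $h$ gives exactly $L_{CMUBs}$. This is the Wu--Yu--M{\o}lmer bound, so I would cite it if it is allowed as a known result rather than rederive the convexity of the envelope. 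Checking that envelope's convexity across the integer breakpoints of $v$ is the delicate part.
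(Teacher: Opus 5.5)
Your proposal is correct and is essentially the paper's proof. The $0$ branch comes from specialising the Zhang--Fei relation to CMUBs ($m=d+1$, $c_{ij}=1/d$), and the $\delta_n^{CMUBs}$ branch comes from $S(M_i|B_t)=H(M_i)-\mathcal{I}(M_i:B_t)$ combined with the cited purity bound $\sum_i H(M_i)\geqslant L_{CMUBs}$; your explicit check that the remainder equals $\delta_n^{CMUBs}$ via $S(A|B_t)=S(A)-\mathcal{I}(A:B_t)$ just fills in algebra the paper leaves implicit.

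One small precision if you rederive that bound: the exact Harremo\"es--Tops{\o}e minimal-entropy curve lies above the chords joining consecutive points $(1/k,\log_2 k)$, so it is not itself convex. Jensen must therefore be applied to the piecewise-linear interpolation through those points, which is convex and is exactly what produces $L_{CMUBs}$.
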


{\textit{Proof.}} Using the relation between the conditional von Neumann entropy and the mutual information,
\(S(M_i|B_t)=H(M_i)-\mathcal{I}(M_i:B_t)\),
one obtains

\begin{equation}\label{conditionentropy}
\begin{aligned}
\sum_{t=1}^n\sum_{M_i\in \mathbf{S}_t} S(M_i|B_t)=\sum_{i=1}^{d+1} H(M_i)-\sum_{t=1}^n\sum_{M_i\in \mathbf{S}_t} \mathcal{I}(M_i:B_t).
\end{aligned}
\end{equation}

For CMUBs it has been shown that \cite{sanchezruiz1995improved,sanchez1993entropic,coles2017entropic}, 
\begin{equation}\label{sanchezruiz}
\begin{aligned}
\sum_{i=1}^{d+1} H (M_i)\geqslant & (d+1)[\log_2(1+\lfloor{v}\rfloor)\\
&-\frac{\lfloor{v}\rfloor}{v}(1+\lfloor{v}\rfloor-v)\log_2(1+\frac{1}{\lfloor{v}\rfloor})],
\end{aligned}
\end{equation}
where $v=\frac{d+1}{{\mathrm{tr}}(\rho^2)+1}$ with $\rho$ being the state measured. 

By incorporating the generalized entropic uncertainty relation for \(d+1\) CMUBs in the presence of \(n\) quantum memories, as established in Ref.~\cite{zhang2023entropic},
\begin{equation}\label{zhangeur1}
\begin{aligned}
\sum_{t=1}^n\sum_{M_i\in \mathbf{S}_t} S(M_i|B_t)\geqslant &\frac{d+1}{2} \log _2 d\\ 
&+\sum_{t=1}^{n} \frac{m_t(m_t-1)}{2d}S(A|B_t),
\end{aligned}
\end{equation}
we complete the proof. $\Box$

In particular, if only two players ($n=1$), Alice and Bob, agree on CMUBs $\left\{M_i\right\}_{i=1}^{d+1}$. Alice performs one of the prescribed measurements and publicly communicates her choice to Bob, whose goal is to infer the corresponding measurement outcome. In the bipartite scenario, where \(d+1\) measurements are implemented on Alice’s subsystem \(A\), the uncertainty relation~(\ref{Thm1}) reduces to

\begin{equation}
\begin{aligned}
\sum_{i=1}^{d+1}S(M_i|B)\geqslant &\frac{d+1}{2} \log _2 d+\frac{d+1}{2}S(A|B)\\
&+\max\left\{0,\delta_{1}^{CMUBs}\right\},
\end{aligned}
\end{equation}
where 
\begin{equation*}
\begin{aligned}
\delta_{1}^{CMUBs}=&L_{CMUBs}-\frac{d+1}{2} \log _2 d-\frac{d+1}{2}S(A)\\
&+\frac{d+1}{2}\mathcal{I}(A:B)-\sum_{i=1}^{d+1} \mathcal{I}(M_i:B).
\end{aligned}
\end{equation*}

For the case of  $d+2$ players ($n=d+1$),  the cardinality $m_t=1$. Our QMA-EUR in Theorem \ref{qmaeurthm1} reduces to
\begin{equation}\label{mubsd2}
\begin{aligned}
&\sum_{i=1}^{d+1} S(M_i|B_i)\geqslant \frac{d+1}{2} \log _2 d+\max\left\{0,\delta_{d+1}^{CMUBs}\right\},
\end{aligned}
\end{equation}
where
\begin{equation*}
\begin{aligned}
\delta_{d+1}^{CMUBs}=L_{CMUBs}-\frac{d+1}{2} \log _2 d
-\sum_{i=1}^{d+1}\mathcal{I}(M_i:B_i).
\end{aligned}
\end{equation*}

In Refs.~\cite{sanchezruiz1995improved,sanchez1993entropic}, Sánchez-Ruiz proposed a tighter entropic uncertainty relation for CMUBs with respect to a measured state $\rho$,
\begin{equation}\label{sanchezruiz2}
\sum_{i=1}^{d+1} H\left(M_i\right)\leqslant {U}_{CMUBs},
\end{equation}
where ${U}_{CMUBs}=(d+1)\log_2 d-\frac{(d-1)(d\mathrm{tr}(\rho^2)-1)\log_2(d-1)}{d(d-2)}$ for $d>2$ and $U_{CMUBs}=(d+1)\log_2 d-\frac{(d-1)(d\mathrm{tr}(\rho^2)-1)}{d\ln 2}$ for $d=2$.

\begin{theorem}\label{qmaeurthm2}
The following entropic uncertainty relation in terms of the conditional von-Neumann entropies, purity and Holevo quantities holds for CMUBs in the context of $n$ memories,
\begin{equation}
\begin{aligned}
\sum_{t=1}^n\sum_{M_i\in \mathbf{S}_t} S(M_i|B_t) 
\leqslant U_{CMUBs}-\sum_{t=1}^n\sum_{M_i\in \mathbf{S}_t} \mathcal{I}(M_i:B_t),
\end{aligned}
\end{equation}
where $U_{CMUBs}=(d+1)\log_2 d-\frac{d-1}{d(d-2)}\log_2(d-1)(d\Pi(\rho^A)-1)$ for $d>2$ and $U_{CMUBs}=(d+1)\log_2 d-\frac{d-1}{d\ln 2}(d(\Pi(\rho^A)-1)$ for $d=2$.
\end{theorem}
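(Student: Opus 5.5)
The plan is to mirror the proof of Theorem~\ref{qmaeurthm1}, replacing the lower bound on the Shannon entropies by the upper bound of Sánchez-Ruiz. The first step is the identity $S(M_i|B_t)=H(M_i)-\mathcal{I}(M_i:B_t)$. Here $H(M_i)=S(\rho^{M_i})$ is the Shannon entropy of the outcome distribution of $M_i$ on $\rho^A$. The Holevo quantity is $\mathcal{I}(M_i:B_t)=S(\rho^{M_i})+S(\rho^{B_t})-S(\rho^{M_iB_t})$.

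This identity holds for each $i$ and its memory $B_t$, and it is already recorded in Eq.~(\ref{conditionentropy}). Summing it over the partition $\{\mathbf{S}_t\}$ gives
\begin{equation*}
\sum_{t=1}^n\sum_{M_i\in \mathbf{S}_t} S(M_i|B_t)=\sum_{i=1}^{d+1} H(M_i)-\sum_{t=1}^n\sum_{M_i\in \mathbf{S}_t} \mathcal{I}(M_i:B_t).
\end{equation*}
The left-hand side of the theorem is thereby split into a memory-independent part, $\sum_i H(M_i)$, and the Holevo correction term. The correction term appears unchanged on the right-hand side of the claimed inequality.

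The second step bounds $\sum_{i=1}^{d+1}H(M_i)$ from above using the Sánchez-Ruiz inequality~(\ref{sanchezruiz2}), applied to the measured state $\rho=\rho^A$. The probabilities $p^i_k=\langle\psi^i_k|\rho^A|\psi^i_k\rangle$ depend only on Alice's reduced state, because the measurement acts on $A$ alone. So the multipartite state enters only through $\rho^A$, and $\mathrm{tr}(\rho^2)$ becomes the purity $\Pi(\rho^A)$. Substituting this gives $\sum_i H(M_i)\leqslant U_{CMUBs}$, with $U_{CMUBs}$ exactly as in the statement, treating the cases $d>2$ and $d=2$ separately. Inserting this into the identity above yields the theorem.

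I expect no real obstacle, since the result is an immediate consequence of the two ingredients. The only point to check is that~(\ref{sanchezruiz2}) is a valid upper bound for arbitrary, including mixed, $\rho^A$ with a complete set of $d+1$ MUBs. If it had to be justified rather than cited, I would start from the fact that $\sum_i\sum_k (p^i_k)^2=\mathrm{tr}(\rho^2)+1$ for a CMUB. The task is then to maximize $\sum_i H(p^i)$ under this single index-of-coincidence constraint. The extremal distributions have one large component and $d-1$ equal small ones, which produces the $\log_2(d-1)/(d-2)$ coefficient for $d>2$; the case $d=2$ follows from a limiting or concavity argument that gives the $1/\ln 2$ factor.
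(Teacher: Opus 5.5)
Your proposal is correct and matches the paper's proof. The paper likewise sums the identity $S(M_i|B_t)=H(M_i)-\mathcal{I}(M_i:B_t)$ over the partition and then applies the cited Sánchez-Ruiz upper bound~(\ref{sanchezruiz2}) with $\mathrm{tr}(\rho^2)=\Pi(\rho^A)$; your remark that the outcome statistics depend only on Alice's reduced state is the step the paper leaves implicit, and your sketch of how~(\ref{sanchezruiz2}) itself might be proved is not needed, since the paper simply cites it.
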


{\textit{Proof.}} By combining two inequalities (\ref{conditionentropy}) and (\ref{sanchezruiz2}), we obtain
\begin{align*}
&\sum_{t=1}^n\sum_{M_i\in \mathbf{S}_t} S(M_i|B_t)\\
&=\sum_{i=1}^{d+1} H(M_i)-\sum_{t=1}^n\sum_{M_i\in \mathbf{S}_t} \mathcal{I}(M_i:B_t)\\
&\leqslant U_{CMUBs}-\sum_{t=1}^n\sum_{M_i\in \mathbf{S}_t} \mathcal{I}(M_i:B_t),
\end{align*}
which completes the proof. $\Box$

To illustrate our results, we consider below particular cases with detailed examples.

\subsection{One quantum memory}

{\textit{Example 1.}}  Alice and Bob share a class of two-qubit pure states, $\rho=\cos{\theta}|00\ra+\sin{\theta}|11\ra$,
where $0\leqslant \theta\leqslant 2\pi$. We consider eigenvectors of three Pauli matrices $\sigma_x,\ \sigma_y$ and $\sigma_z$ as CMUBs,
\begin{equation}\label{cmubs2}
\begin{aligned}
&M_{{1}}  =\left\{|0\rangle,|1\rangle\right\}, \\
&M_{{2}} =\{\frac{|0\rangle+|1\rangle}{\sqrt{2}},\frac{|0\rangle-|1\rangle}{\sqrt{2}}\}, \\
&M_{{3}}  =\{\frac{|0\rangle+\mathbf{i}|1\rangle}{\sqrt{2}},\frac{|0\rangle-\mathbf{i}|1\rangle}{\sqrt{2}}\}.
\end{aligned}
\end{equation}
In this case, our (\ref{Thm1}) reduces to
\begin{equation}
\begin{aligned}
&\sum_{i=1}^3 S(M_i|B)\geqslant \frac{3}{2} +\frac{3}{2}S(A|B)+\max\left\{0,\delta_{1}^{CMUBs}\right\},
\end{aligned}
\end{equation}
where $\delta_{1}^{CMUBs}=L_{CMUBs}-\frac{3}{2} -\frac{3}{2}S(A)+\frac{3}{2}\mathcal{I}(A:B)-\sum_{i=1}^{3} \mathcal{I}(M_i:B)$.
Our theorem~\ref{qmaeurthm2} gives rise to
\begin{equation}
\begin{aligned}
&\sum_{i=1}^3 S(M_i|B)\leqslant 3-\frac{2\mathrm{tr}((\rho^A)^2)-1}{2\ln 2} -\sum_{i=1}^{3} \mathcal{I}(M_i:B).
\end{aligned}
\end{equation}

As illustrated in Fig.~\ref{figex1}, the bound provided by Theorem~\ref{qmaeurthm1} is strictly tighter than the corresponding bound in inequality~(\ref{zhang}) reported in Ref.~\cite{zhang2023entropic}.

 \begin{figure}[tbp]
\includegraphics[width=9cm]{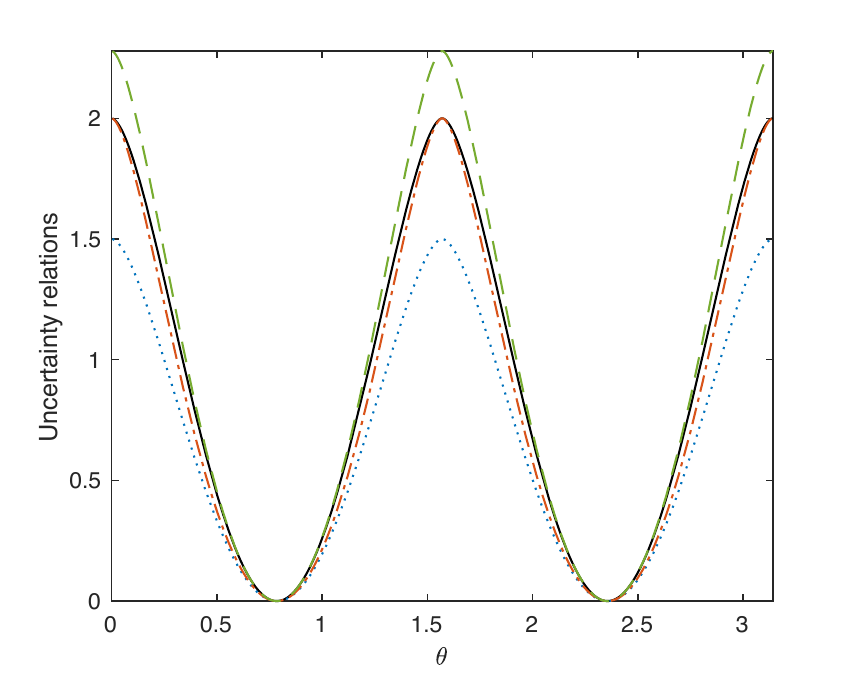}
 \caption{ \label{figex1}Comparisons between our uncertainty relations with (\ref{zhang}) in the scenario of one memory. The black (solid) curve represents the uncertainty. The blue (dotted) and red (dot-dashed) denote the lower bounds in (\ref{zhang}) and Theorem~\ref{qmaeurthm1}, respectively. The green (dashed) curve denotes the upper bound of Theorem~\ref{qmaeurthm2}.}
 \end{figure}

{\textit{Example 2.}}  Let us consider a class of two-qutrit pure states, $\rho=\sin{\phi}\cos{\theta}|00\ra+\sin{\phi}\sin{\theta}|11\ra+\cos{\phi}|22\ra$,
where $0\leqslant \phi\leqslant \pi$ and $0\leqslant \theta\leqslant  2\pi$. We take into account the following CMUBs in this case \cite{10.5555/2011464.2011470},
\begin{align*}
&M_{{1}}  \!=\!\{|0\rangle,|1\rangle,|2\ra\}, \\
&M_{{2}} \!=\!\{\frac{|0\rangle\!+\!|1\rangle\!+\!|2\rangle}{\sqrt{3}},\frac{|0\rangle\!+\!\omega|1\rangle\!+\!\omega^2|2\rangle}{\sqrt{3}},\frac{|0\rangle\!+\!\omega^2|1\rangle\!+\!\omega|2\rangle}{\sqrt{3}}\}, \\
&M_{{3}}  \!=\!\{\frac{|0\rangle\!+\!\omega|1\rangle\!+\!\omega|2\rangle}{\sqrt{3}},\frac{|0\rangle\!+\!\omega^2|1\rangle\!+\!|2\rangle}{\sqrt{3}},\frac{|0\rangle\!+\!|1\rangle\!+\!\omega^2|2\rangle}{\sqrt{3}}\}, \\
&M_{{4}} \! =\!\{\frac{|0\rangle\!+\!\omega^2|1\rangle\!+\!\omega^2|2\rangle}{\sqrt{3}},\frac{|0\rangle\!+\!\omega|1\rangle\!+\!|2\rangle}{\sqrt{3}},\frac{|0\rangle\!+\!|1\rangle\!+\!\omega|2\rangle}{\sqrt{3}}\},
\end{align*}
where $\omega=\frac{-1+\sqrt{3}\mathbf{i}}{2}$.

 In this scenario, our Theorem~\ref{qmaeurthm1} gives rise to
\begin{equation}
\begin{aligned}
&\sum_{i=1}^4 S(M_i|B)\geqslant 2\log_2 3 +2 S(A|B)+\max\left\{0,\delta_{1}^{CMUBs}\right\},
\end{aligned}
\end{equation}
where 
$\delta_{1}^{CMUBs}=L_{CMUBs}-2\log_2 3 -2 S(A)+2\mathcal{I}(A:B)-\sum_{i=1}^{4} \mathcal{I}(M_i:B)$. Our Theorem~\ref{qmaeurthm2} reduces to
\begin{equation}
\begin{aligned}
&\sum_{i=1}^4 S(M_i|B)\leqslant 4\log_2 3+\frac{2}{3}-2\mathrm{tr}((\rho^A)^2)-\sum_{i=1}^{4} \mathcal{I}(M_i:B).
\end{aligned}
\end{equation}

We compare the lower bounds of our theorems with that of ~(\ref{zhang}) by setting $\phi=\frac{\pi}{4}$ and $\theta=\frac{\pi}{4}$, respectively. Fig.~\ref{figex2} demonstrates that the lower bound established in Theorem~\ref{qmaeurthm1} surpasses the bound given in inequality~(\ref{zhang}).

 \begin{figure}[tbp]
	\begin{subfigure}
	 \centering
	\label{figex2a}
		\includegraphics[width=4.1cm]{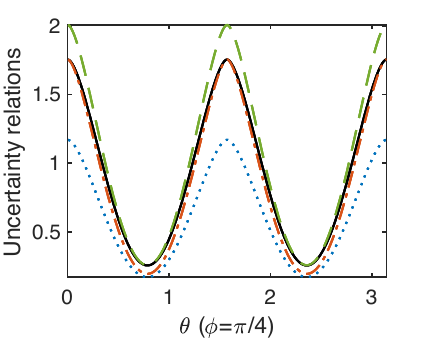}
	\end{subfigure}
	\begin{subfigure}
		\centering
		\label{figex2b}
		\includegraphics[width=4.1cm]{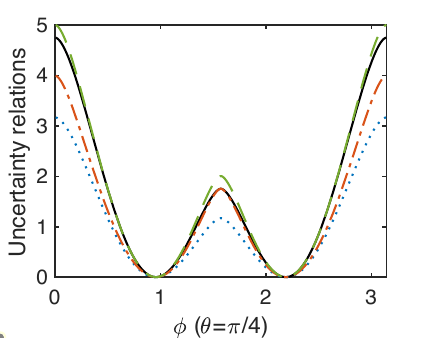}
	\end{subfigure}
 \caption{Comparisons between our uncertainty relations with (\ref{zhang}) in the scenario of one memory. The black (solid) curve represents the uncertainty. The blue (dotted) and red (dot-dashed) denote the lower bounds in (\ref{zhang}) and Theorem~\ref{qmaeurthm1}, respectively. The green (dashed) curves denotes the upper bound of Theorem~\ref{qmaeurthm2}.}
 \label{figex2}
 \end{figure}

 {\textit{Example 3.}} To address a more general situation, we consider arbitrary random states acting on a \(4\otimes 4\) Hilbert space, expressed as
\[
\rho^{AB}=\sum_{k=1}^{16} p_k |\psi_k\ra\la\psi_k|,
\]
where \(p_k\) and \({\psi_k}\) represent the eigenvalues and eigenvectors of \(\rho^{AB}\), respectively. The probability distribution \(\{p_k\}\) is generated by first producing a sequence of random numbers \(q_k\) using a uniform random function \(f(0,1)\) on the interval \([0,1]\), with \(q_1=f(0,1)\) and \(q_{k+1}=f(0,1)\,q_k\) for \(k=1,\ldots,15\). The normalized probabilities are then defined as \(p_k=q_k/\sum_{j=1}^{16} q_j\).

To construct the corresponding eigenvectors, we generate a real matrix \(R\) of order sixteen whose entries are independently drawn from the uniform distribution \(f(-1,1)\) on \([-1,1]\). From this matrix, a random Hermitian matrix is formed as
\[
\tilde{R}=D+(U^{\mathsf{T}}+U)+\mathbf{i}(L^{\mathsf{T}}+L),
\]
where \(D\), \(U\), and \(L\) denote the diagonal, strictly upper triangular, and strictly lower triangular parts of \(R\), respectively, and \(U^{\mathsf{T}}\) and \(L^{\mathsf{T}}\) are their transposes. Diagonalizing \(\tilde{R}\) yields sixteen normalized eigenvectors, which are used to define the random state \(\rho^{AB}\).

For $d=4$, we consider the following complete MUBs \cite{10.5555/2011464.2011470}:
\begin{align*} 
& M_1=\left\{(1,0,0,0),(0,1,0,0),(0,0,1,0),(0,0,0,1)\right\}, \\ 
& M_2=\{\frac{1}{2}(1,1,1,1), \frac{1}{2}(1,1,-1,-1),\\
&\quad\quad\quad\ \frac{1}{2}(1,-1,-1,1), \frac{1}{2}(1,-1,1,-1)\}, \\ 
& M_3=\{\frac{1}{2}(1,-1,-\mathbf{i},-\mathbf{i}), \frac{1}{2}(1,-1, \mathbf{i}, \mathbf{i}),\\
&\quad\quad\quad\ \frac{1}{2}(1,1, \mathbf{i},-\mathbf{i}), \frac{1}{2}(1,1,-\mathbf{i}, \mathbf{i})\}, \\ & 
M_4=\{\frac{1}{2}(1,-\mathbf{i},-\mathbf{i},-1), \frac{1}{2}(1,-\mathbf{i}, \mathbf{i}, 1),\\
&\quad\quad\quad\ \frac{1}{2}(1, \mathbf{i}, \mathbf{i},-1), \frac{1}{2}(1, \mathbf{i},-\mathbf{i}, 1)\}, \\ 
& M_5=\{\frac{1}{2}(1,-\mathbf{i},-1,-\mathbf{i}), \frac{1}{2}(1,-\mathbf{i}, 1, \mathbf{i}), \\
&\quad\quad\quad\ \frac{1}{2}(1, \mathbf{i},-1, \mathbf{i}), \frac{1}{2}(1, \mathbf{i}, 1,-\mathbf{i})\}.
\end{align*}

 In this scenario, our Theorem~\ref{qmaeurthm1} becomes
\begin{equation}
\begin{aligned}
&\sum_{i=1}^5 S(M_i|B)\geqslant \frac{5}{2}\log_2 5 +\frac{5}{2}S(A|B)+\max\left\{0,\delta_{1}^{CMUBs}\right\},
\end{aligned}
\end{equation}
where 
$\delta_{1}^{CMUBs}=L_{CMUBs}-\frac{5}{2}\log_2 5 -\frac{5}{2} S(A)+\frac{5}{2}\mathcal{I}(A:B)-\sum_{i=1}^{5} \mathcal{I}(M_i:B)$. Our Theorem~\ref{qmaeurthm2} reduces to
\begin{equation}
\begin{aligned}
\sum_{i=1}^5 S(M_i|B)&\leqslant 10-\frac{3\log_2 3}{8}(4\mathrm{tr}((\rho^A)^2)-1)\\
&\quad-\sum_{i=1}^{5} \mathcal{I}(M_i:B).
\end{aligned}
\end{equation}
The Fig.~\ref{figex53} indicates that the uncertainty bound derived in Theorem~\ref{qmaeurthm1} improves upon the bound presented in inequality~(\ref{zhang}).

 \begin{figure}[tbp]
		 \centering
 \subfigure[]
 {
\label{figex5a3}
 \includegraphics[width=4.1cm]{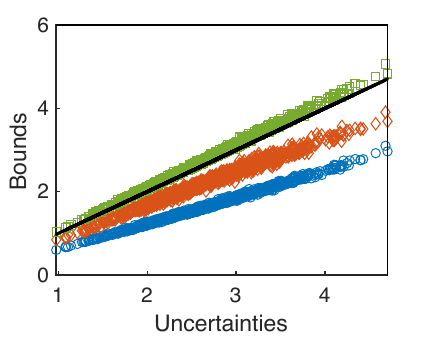}}
 \subfigure[]
 {
  \label{figex5b3}
 \includegraphics[width=4.1cm]{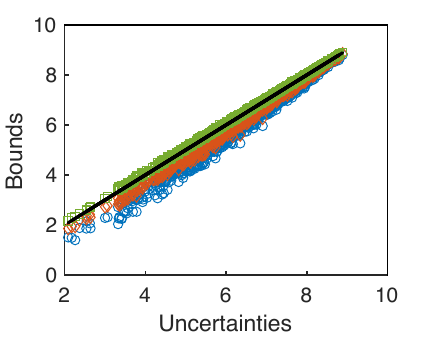}}
 \caption{Comparisons between our uncertainty relations and (\ref{zhang}) in the scenario of three memories with respect to $10^3$ randomly pure states (a) and  $10^3$ randomly mixed states (b). The $y$ axis denotes bounds of the uncertainty. The $x$ axis denotes uncertainty. The blue circles and red diamonds denote the lower bounds of (\ref{zhang}) and Theorem~\ref{qmaeurthm1}, respectively. The green squares denote the upper bound of Theorem~\ref{qmaeurthm2}. }
  \label{figex53}
 \end{figure}

 \subsection{Two quantum memories}
Consider that Alice, Bob$_1$ and Bob$_2$ agree on CMUBs $\left\{M_i\right\}_{i=1}^{d+1}\equiv \mathbf{S}_1 \bigcup\mathbf{S}_2$ with $\mathbf{S}_1 \bigcap \mathbf{S}_2=\emptyset$. If Alice carries out the MUB measurement in $\mathbf{S}_1$, she announces her choice to Bob$_1$, else announces to Bob$_2$. Bob$_1$ and Bob$_2$'s task is to guess the outcome of Alice's measurement.

{\textit{Example 4.}}  Alice, Bob$_1$ and Bob$_2$ share the generalized W states,
$|W\ra=\sin{\phi}\cos{\theta}|001\ra+\sin{\phi}\sin{\theta}|010\ra+\cos{\phi}|100\ra$
where $0\leqslant\phi\leqslant\pi$ and $0\leqslant\theta\leqslant 2\pi$. Alice carries out the measurement given by the CMUBs in (\ref{cmubs2}). Bob$_1$ guesses the outcome when Alice carries out $M_1$ and Bob$_2$ guesses the outcome when Alice carries out $M_2$ and $M_3$. In this scenario, our Theorem~\ref{qmaeurthm1} gives rise to
\begin{equation}
\begin{aligned}
& S(M_1|B_1)+S(M_2|B_2)+S(M_3|B_2) \\ 
&\geqslant \frac{3}{2}+ \frac{1}{2}S(A|B_2)+\max\left\{0,\delta_{2}^{CMUBs}\right\},
\end{aligned}
\end{equation}
where $\delta_{2}^{CMUBs}=L_{CMUBs}-\frac{3}{2} - \frac{1}{2}S(A)+\frac{1}{2}\mathcal{I}(A:B_2)-\mathcal{I}(M_1:B_1)-\mathcal{I}(M_2:B_2)-\mathcal{I}(M_3:B_2)$.
Our theorem~\ref{qmaeurthm2} reduces to
\begin{equation}
\begin{aligned}
& S(M_1|B_1)+S(M_2|B_2)+S(M_3|B_2) \\ 
&\leqslant 3\!-\!\frac{2\mathrm{tr}((\rho^A)^2)\!-\!1}{2\ln 2}\!-\!\mathcal{I}(M_1\!:\!B_1)\!-\!\mathcal{I}(M_2\!:\!B_2)\!-\!\mathcal{I}(M_3\!:\!B_2).
\end{aligned}
\end{equation}

We compare the lower bounds of our theorem with that of ~(\ref{zhang}) by taking $\phi=\frac{2\pi}{3}$ and $\theta=\frac{2\pi}{3}$, respectively. As shown in Fig.~\ref{figex3}, the lower bound obtained from Theorem~\ref{qmaeurthm1} is stronger than the bound given by inequality~(\ref{zhang}).

\begin{figure}[tbp]
	\begin{subfigure}
	 \centering
	\label{figex3a}
		\includegraphics[width=4.1cm]{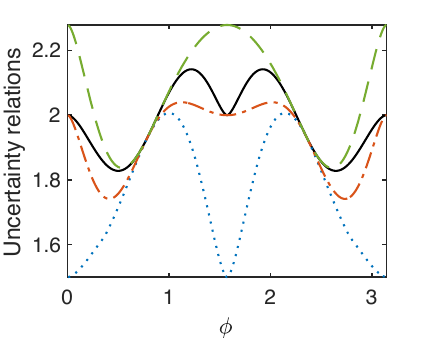}
	\end{subfigure}
	\begin{subfigure}
		\centering
		\label{figex3b}
		\includegraphics[width=4.1cm]{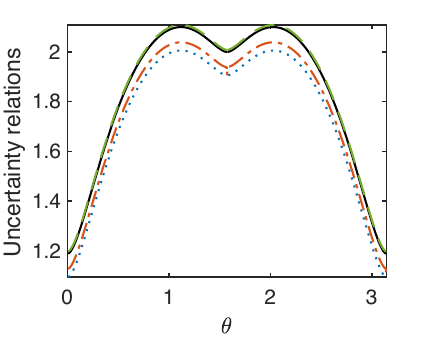}
	\end{subfigure}
 \caption{Comparisons between our uncertainty relations and (\ref{zhang}) in the scenario of two memories. The black (solid) curve represents the uncertainty. The blue (dotted) and red (dot-dashed) denote the lower bounds in (\ref{zhang}) and Theorem~\ref{qmaeurthm1}, respectively. The green (dashed) curve denotes the upper bound of Theorem~\ref{qmaeurthm2}.}
 \label{figex3}
 \end{figure}
 
\subsection{Three quantum memories}

Consider that four players Alice, Bob$_1$, Bob$_2$ and Bob$_3$ agree on the CMUBs given by the three Pauli matrices $\sigma_x,\ \sigma_y$ and $\sigma_z$. Alice carries out the $i$-th MUB and announces her choice to Bob$_i$. Bob$_i$'s task is to minimize the uncertainty about guessing the outcome of Alice's measurement.

In this case, our Theorem~\ref{qmaeurthm1} reduces to
\begin{equation}
\begin{aligned}
&\sum_{i=1}^3 S(M_i|B_i)\geqslant \frac{3}{2} +\max\left\{0,\delta_{3}^{CMUBs}\right\},
\end{aligned}
\end{equation}
where $\delta_{3}^{CMUBs}=L_{CMUBs}-\frac{3}{2}  -\sum_{i=1}^{3} \mathcal{I}(M_i:B_i)$.
Our Theorem~\ref{qmaeurthm2} reduces to
\begin{equation}
\begin{aligned}
&\sum_{i=1}^3 S(M_i|B_i)\leqslant 3-\frac{2\mathrm{tr}((\rho^A)^2)-1}{2\ln 2} -\sum_{i=1}^{3} \mathcal{I}(M_i:B_i).
\end{aligned}
\end{equation}

{\textit{Example 5.}}   Alice, Bob$_1$, Bob$_2$ and Bob$_3$ share four-qubit pure states, $\rho=\cos{\theta}|0000\ra+\sin{\theta}|1111\ra$, where $0\leqslant \theta\leqslant 2\pi$. As shown in Fig.~\ref{figex4} the lower bound of our Theorem~\ref{qmaeurthm1} is stronger than the bound given by inequality~(\ref{zhang}).

{\textit{Example 6.}} The set of $4\otimes 4$ random states generated in Example 3 can be regarded as a set of four-qubit states. When consider the CMUBs given by the three Pauli matrices, the Fig.~\ref{figex5} shows that the lower bound of our Theorem~\ref{qmaeurthm1} is also tighter than that of ~(\ref{zhang}) in the scenario of $10^3$ four-qubit random states.

\begin{figure}[t]
	 \centering
		\includegraphics[width=9cm]{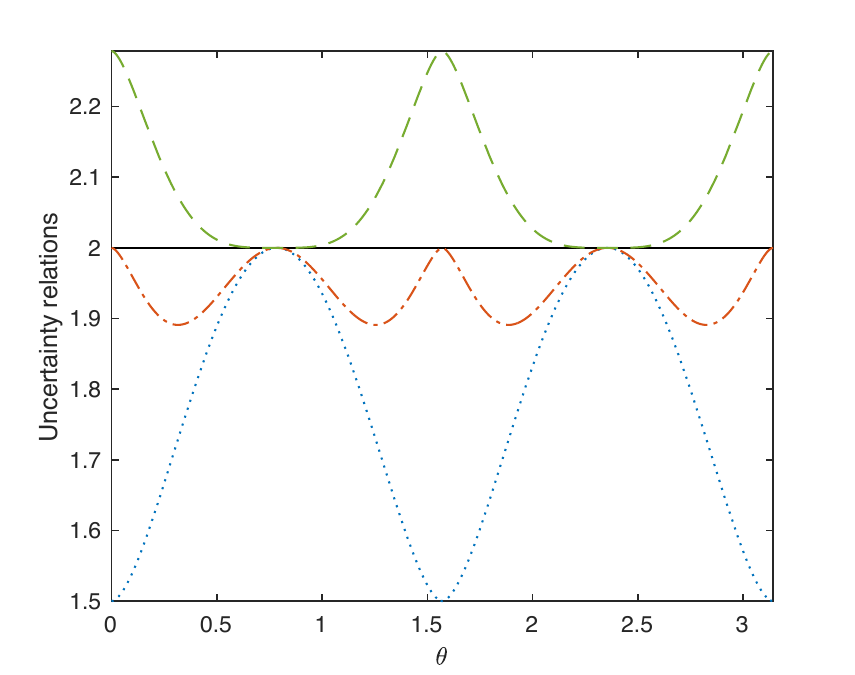}
		 \caption{Comparisons between our uncertainty relations and (\ref{zhang}) in the scenario of three memories. The black (solid) curve represents the uncertainty. The blue (dotted) and red (dot-dashed) denote the lower bounds in (\ref{zhang}) and Theorem~\ref{qmaeurthm1}, respectively. The green (dashed) curve denotes the upper bound of Theorem~\ref{qmaeurthm2}.}
 \label{figex4}
 \end{figure}

\begin{figure}[t]
		 \centering
 \subfigure[]
 {
\label{figex5a}
 \includegraphics[width=4.1cm]{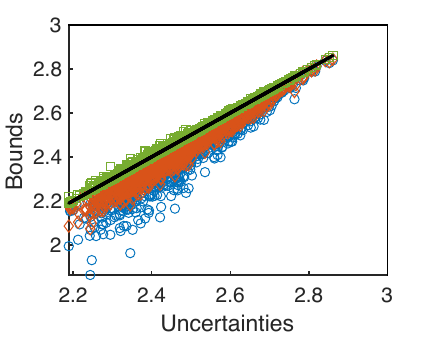}}
 \subfigure[]
 {
  \label{figex5b}
 \includegraphics[width=4.1cm]{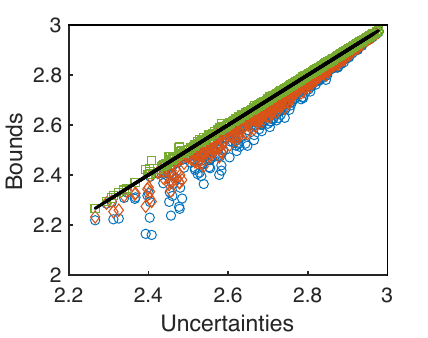}}
 \caption{Comparisons between our uncertainty relations and (\ref{zhang}) in the scenario of three memories with respect to $10^3$ randomly pure states (a) and  $10^3$ randomly mixed states (b). The $y$ axis denotes bounds of the uncertainty. The $x$ axis denotes uncertainty. The blue circles and red diamonds denote the lower bounds of (\ref{zhang}) and Theorem~\ref{qmaeurthm1}, respectively. The green squares denote the upper bound of Theorem~\ref{qmaeurthm2}. }
  \label{figex5}
 \end{figure}

\section{Conclusion}
We have presented two QMA-EURs for CMUBs in the presence of quantum memories and illustrated them in various scenarios involving one, two, and three memories. Although our results outperform existing ones, they are only applicable to dimensions of prime power. Our results given by mutually unbiased bases can be naturally generalized to mutually unbiased measurements (MUMs) for any dimension 
$d$ in the presence of quantum memories. Theoretically, our method may be extended to QMA-EURs with respect to observables or general measurements, provided that a tighter lower bound for the corresponding EURs without quantum memories can be established.

While QMA-EURs for two measurements can be directly applied to the security analysis in quantum key distribution (QKD) protocols employing two conjugate observables~\cite{berta2010uncertainty}, our QMA-EURs have the potential to quantify the secure key rate in multipartite QKD scenarios. Experimentally, uncertainty relations without quantum memory have been demonstrated by using conventional quantum platforms such as trapped ions~\cite{zhang2013state,duan2010colloquium} and photonic qudits~\cite{lapkiewicz2011experimental}. In the presence of quantum memory, experimental demonstrations have also been achieved with more than two measurement settings~\cite{li2011experimental,prevedel2011experimental,PhysRevA.110.062220,PhysRevA.101.032101}. Hence, the uncertainty relations we derived are also experimentally feasible.

\bigskip
\noindent{\bf Acknowledgments}\, \,
We are grateful to X. Ma for her valuable comments on the paper. This work is supported by the National Natural Science Foundation of China (NSFC) (Grant Nos.~12526648, 12171044), the specific research fund of the Innovation Platform for Academicians
of Hainan Province, Natural Science Foundation of Hunan Province (Grant No.~2025JJ60025), Scientific Research Project of the Education Department of Hunan Province (Grant No.~24B0298) and Changsha University of Science and Technology (Grant No.~097000303923).

\bibliography{qmaeur}

\end{document}